 \patchcmd\Gread@eps{\@inputcheck#1 }{\@inputcheck"#1"\relax}{}{}
\let\MYcaption\@makecaption
\newtheorem{thm}{Theorem}
\newtheorem{corollary}{Corollary}
\newtheorem{lemma}{Lemma}
\let\@makecaption\MYcaption
\begin{document}

\title{The superadditivity effects of quantum capacity decrease with the dimension for qudit depolarizing channels}

\author{Josu {Etxezarreta Martinez}}
\email[Corresponding author: ]{jetxezarreta@tecnun.es}
    \affiliation{ Tecnun - University of Navarra, 20018 San Sebastian, Spain}
\author{Antonio deMarti iOlius}
    \affiliation{ Tecnun - University of Navarra, 20018 San Sebastian, Spain}
\author{Pedro M. Crespo} 
    \affiliation{ Tecnun - University of Navarra, 20018 San Sebastian, Spain}
\date{\today}

\begin{abstract}
Quantum channel capacity is a fundamental quantity in order to understand how good can quantum information be transmitted or corrected when subjected to noise. However, it is generally not known how to compute such quantities, since the quantum channel coherent information is not additive for all channels, implying that it must be maximized over an unbounded number of channel uses. This leads to the phenomenon known as superadditivity, which refers to the fact that the regularized coherent information of $n$ channel uses exceeds one-shot coherent information. In this article, we study how the gain in quantum capacity of qudit depolarizing channels relates to the dimension of the systems considered. We make use of an argument based on the no-cloning bound in order to proof that the possible superadditive effects decrease as a function of the dimension for such family of channels. In addition, we prove that the capacity of the qudit depolarizing channel coincides with the coherent information when $d\rightarrow\infty$. We also discuss the private classical capacity and obain similar results. We conclude that when high dimensional qudits experiencing depolarizing noise are considered, the coherent information of the channel is not only an achievable rate but essentially the maximum possible rate for any quantum block code.
\end{abstract}

\keywords{Quantum information theory, quantum channel capacity, superadditivity, depolarizing channel.}
\maketitle

\section{Introduction}
Classical communications were revolutionized when Claude Shannon introduced the noisy-channel coding theorem in his groundbreaking work \textit{A Mathematical Theory of Communication} \cite{shannon}. In such theorem, Shannon introduced the concept of channel capacity, which refers to the maximum coding rate for which asymptotically error-free communications are possible over a noisy channel. The consequences to this result are momentous since it establishes the limit, in terms of rate, for which error correction makes sense and, thus, the target that coding theorists should seek when designing their codes. The computation of such quantity results to be simple due to the fact that the classical mutual information is additive, implying that the regularization over $n$ channel uses needed to compute the capacity of the channel results in a single-letter formula, i.e. in the optimization of such quantity over a single use of the channel \cite{shannon}.

The development of quantum information theory followed the steps of Shannon, introducing the concept of quantum channel capacity similarly to its classical counterpart, i.e. the maximum quantum coding rate for comunication/correction (note that in the quantum setting the noise can arise from temporal evolution) with error rates vanishing asymptotically when quantum information is subjected to noise. In general, the computation of the quantum channel capacity, $C_\mathrm{Q}$, is based on the following regularization \cite{lsd1,lsd2,lsd3,lsd4,wildeQIT}:

\begin{equation}\label{eq:cqregedf}
C_\mathrm{Q}(\mathcal{N}) = \lim_{n\rightarrow \infty} \frac{1}{n} Q_{\mathrm{coh}}(\mathcal{N}^{\otimes n}),
\end{equation}
where $\mathcal{N}$ denotes the quantum channel and $Q_{\mathrm{coh}}$ refers to the channel coherent information defined as

\begin{equation}\label{eq:cohdef}
\begin{split}
Q_{\mathrm{coh}}(\mathcal{N}) &= \max_{\rho} I_{\mathrm{coh}}(\mathcal{N},\rho) \\&= \max_{\rho} S(\mathcal{N}(\rho)) - S(\mathcal{N}^c(\rho)),
\end{split}
\end{equation}
with $I_{\mathrm{coh}}(\mathcal{N},\rho)$ the channel coherent information when state $\rho$ is the input, $S$ the von Neumann entropy and $\mathcal{N}^c$ is a complementary channel to the environment.

However, in stark contrast to its classical counterpart, the channel coherent information has been proven not to be additive in general \cite{wildeQIT,super1,super2,super3,unbounded}, implying that the regularization in equation \eqref{eq:cqregedf} involves optimizing over an infinite parameter space.  Given two arbitrary quantum channels $\mathcal{N}_1$, $\mathcal{N}_2$, the most one can say about the coherent channel information of the parallel channel $\mathcal{N}_1\otimes \mathcal{N}_2$ is $Q_\mathrm{coh}(\mathcal{N}_1\otimes \mathcal{N}_2)\geq Q_\mathrm{coh}(\mathcal{N}_1)+Q_\mathrm{coh}(\mathcal{N}_2)$. When strict inequality holds, the channels are said to exhibit superaditivity, otherwise are said to have additive coherent information \cite{siddhu}. Explicit examples of superadditivity have been found for several classes of quantum channels \cite{wildeQIT,unbounded,siddhu,super1,super2,super3,super4,super5,super6,super7,super8,xzzx,neuralSuper,geneticSuper}. Importantly, the non-additivity effects of quantum capacity arise as a result of entanglement in the input state of the channel since state coherent information is additive for unentangled input states, i.e. $I_\mathrm{coh}(\mathcal{N}^{\otimes 2}, \rho\otimes\sigma) = I_\mathrm{coh}(\mathcal{N}, \rho) + I_\mathrm{coh}(\mathcal{N}, \sigma)$ \cite{unbounded}. This implies that entanglement is a resource that may protect quantum information from noise in a more efficient way than what it is classically possible.

Therefore, an important question to be answered is what types of channels have additive channel coherent information so that their capacity reduces to single-letter expressions, i.e., $C_\mathrm{Q}(\mathcal{N}) =\max_{\rho} I_{\mathrm{coh}}(\mathcal{N},\rho) $. At the time of writing, quantum channels with additive channel coherent information belong to the classes of degradable \cite{wildeQIT,degOrig,degDep,watanabe,ftvqc}, conjugate degradable \cite{watanabe,cdeg} and less noisy than the environment \cite{watanabe} channels. The quantum capacities of antidegradable, conjugate antidegradable and entanglement-binding channels are also single-letter characterized, but they are equal to zero \cite{wildeQIT,unbounded,watanabe,entBind,ftvqc}. Recently, examples of quantum channels (the platypus, multi-level amplitude damping and resonant multi-level amplitude damping channels), showing additivity while not being degradable have been found \cite{platypus,mad,remad}.

The depolarizing channel is a widely used quantum channel model in order to describe the noise that quantum information experiences \cite{josuchannels}. This channel is characterized by the depolarizing probability, $p$, and its quantum channel capacity is still unknown even if it is the simplest and most symmetric nonunitary quantum channel. In general, $d$-dimensional depolarizing channels (those acting on $d$-dimensional quantum states referred as qudits) are antidegradable for $p \geq \frac{d}{2(d+1)}$, while they do not belong to any of the classes of channels previously mentioned for $p < \frac{d}{2(d+1)}$ \cite{depAnti}. Several upper bounds on the quantum capacity of $d$-dimensional depolarizing channels for the non-trivial parameter region have been derived \cite{UB1,UB2,UB3,UB4,UB5,UB6}. However, the quantum capacity of the family of $d$-dimensional depolarizing channels remains a mistery for such region.

In this article, we study how the potential superadditivity effects of the quantum channel capacity, in qudits per channel use units, relate to the dimension of the depolarizing channel. Specifically, we want to observe which is the extra coding rate achievable due to superadditivity when logical qudits are encoded by physical qudits. We provide an argument based on the no-cloning bound in order to study how the quantum capacity gain (defined as the difference between the quantum capacity and the channel coherent information) caused by potential coherent information superadditivity relates to the dimension of the depolarizing channel. We conclude that such possible capacity gain is a monotonically decreasing function with the dimension and, thus, that the superadditive effects are less and less important when the dimension of the depolarizing channels increases. In addition, we determine that for the extremal case in which the dimension of the system is let to grow indefinitely (in the limit where the qudit becomes a quantum oscillator, i.e., a bosonic mode \cite{qudits}), the depolarizing channel capacity coincides with the channel coherent information. We also relate the obtained results with the private capacity of qudit depolarizing channels concluding that such information theoretic quantity behaves in a similar way as the quatum channel capacity.

\section{Qudit depolarizing channels}\label{sec:depolChann}
The $d$-dimensional or qudit depolarzing channel, $\Lambda_p^d: \mathcal{H}_d\rightarrow \mathcal{H}_d$, is the completely-positive, trace preserving (CPTP) map defined as \cite{UB1,UB6,dep1,dep2,dep3}
\begin{equation}\label{eq:depol}
\Lambda_p^d(\rho) = (1-p)\rho + p\mathrm{Tr}(\rho)\frac{I_d}{d},
\end{equation}
where the density matrices $\rho$ are the so-called qudits or quantum states operating over a $d$-dimensional Hilbert space $\mathcal{H}_d$, $I_d/d$ refers to the maximally mixed state of dimension $d$ and $p\in[0,1]$ refers to the depolarizing probability. Consequently, the operation of the qudit depolarizing channel leaves the state uncorrupted with probability $1-p$ while transforming it to the maximally mixed state with probability $p$.

The depolarizing channel has a central role in modeling quantum noise in the theory of quantum information \cite{josuchannels}. Importantly, depolarizing channels can be efficiently simulated as an stochastic noise map by classical means since they fulfill the Gottesman-Knill theorem \cite{josuchannels,GKthm}. This implies that, for example, the performance of quantum error correction codes, key for fault-tolerant quantum computing and communications, can be effectively assesed by traditional methods. Furthermore, Clifford twirling an arbitrary $d$-dimensional CPTP noise map results in a qudit depolarizing channel \cite{josuchannels,twirldep}. Twirling is extensively used in quantum information theory for studying the average effects of a general noise map by mapping them to more symmetric versions of themselves \cite{josuchannels,depAnti,twirl1,twirl2,twirl3}. The twirled channel is obtained by averaging the action of the map over a set of unitaries.  Moreover, the following lemma \cite{twirl1} implies that error correction codes for arbitrary noise maps can be designed by constructing them to correct a twirled map.
\begin{lemma}
\textit{Any correctable code for the twirled channel $\bar{\mathcal{N}}$ is a correctable code for the original channel $\mathcal{N}$ up to an additional unitary correction.}
\end{lemma}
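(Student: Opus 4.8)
The plan is to exploit the linearity of the twirl together with the fact that the identity channel is an extreme point of the convex set of quantum channels. First I would fix the twirling set to be a group (e.g.\ the Clifford group $\mathcal{C}$) and write the twirled channel as a genuine average of conjugates of the original map,
\begin{equation}
\bar{\mathcal{N}} = \frac{1}{|\mathcal{C}|}\sum_{C\in\mathcal{C}} \mathcal{U}_C^\dagger \circ \mathcal{N}\circ\mathcal{U}_C, \qquad \mathcal{U}_C(\rho)=C\rho C^\dagger,
\end{equation}
so that $\bar{\mathcal{N}}=\sum_C \mu_C\,\mathcal{N}_C$ is a convex combination of the conjugated channels $\mathcal{N}_C:=\mathcal{U}_C^\dagger\circ\mathcal{N}\circ\mathcal{U}_C$ with strictly positive weights $\mu_C=1/|\mathcal{C}|$.

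Next, let $\mathcal{E}$ be the encoding isometry of a code that is correctable for $\bar{\mathcal{N}}$, i.e.\ there exists a recovery map $\mathcal{R}$ with $\mathcal{R}\circ\bar{\mathcal{N}}\circ\mathcal{E}=\mathrm{id}$ on the logical system. By linearity this reads $\sum_C \mu_C\,(\mathcal{R}\circ\mathcal{N}_C\circ\mathcal{E})=\mathrm{id}$, a convex combination of logical channels equal to the identity channel. The decisive step is to argue that every term must individually equal the identity. For this I would use the entanglement fidelity $F(\Lambda)=\bra{\Phi}(\mathrm{id}\otimes\Lambda)(\ket{\Phi}\bra{\Phi})\ket{\Phi}$, with $\ket{\Phi}$ a maximally entangled logical state: $F$ is affine in $\Lambda$ and satisfies $F(\Lambda)\le 1$, with equality only for $\Lambda=\mathrm{id}$. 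Applying $F$ to the convex decomposition gives $\sum_C\mu_C\,F(\mathcal{R}\circ\mathcal{N}_C\circ\mathcal{E})=1$ with each summand at most one and all $\mu_C>0$, forcing $F(\mathcal{R}\circ\mathcal{N}_C\circ\mathcal{E})=1$ and hence $\mathcal{R}\circ\mathcal{N}_C\circ\mathcal{E}=\mathrm{id}$ for every $C$. Equivalently, one may invoke that $\mathrm{id}$, being a unitary channel, is an extreme point among CPTP maps.

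Finally I would translate this per-term correctability back into a statement about $\mathcal{N}$. Rewriting $\mathcal{R}\circ\mathcal{U}_C^\dagger\circ\mathcal{N}\circ\mathcal{U}_C\circ\mathcal{E}=\mathrm{id}$ shows that the code obtained by rotating $\mathcal{E}$ with the Clifford $C$, i.e.\ encoding with $\mathcal{E}_C:=\mathcal{U}_C\circ\mathcal{E}$ and recovering with $\mathcal{R}_C:=\mathcal{R}\circ\mathcal{U}_C^\dagger$, perfectly corrects the original channel $\mathcal{N}$. This is exactly the claimed statement: the same code corrects $\mathcal{N}$ once the twirling frame is accounted for, the \emph{additional unitary correction} being the known Clifford $C$ applied at encoding and undone, as $C^\dagger$, inside the recovery; the choice $C=I$ recovers $\mathcal{N}$ with the unmodified recovery $\mathcal{R}$.

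I expect the main obstacle to be the affinity/extreme-point step rather than the bookkeeping: one must make precise that correcting the \emph{average} channel compels correcting each conjugate, which fails for arbitrary convex targets and hinges specifically on $\mathrm{id}$ being extreme. A secondary technical point is to handle continuous twirling groups (the Haar average over the full unitary group, which yields the depolarizing channel) by replacing the finite sum with an integral and the extreme-point argument with its measure-theoretic version, verifying that $F(\mathcal{R}\circ\mathcal{N}_U\circ\mathcal{E})=1$ holds almost everywhere and, by continuity in $U$, at $U=I$.
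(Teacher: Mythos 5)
Your argument is correct, but be aware that the paper does not actually prove this lemma: it is quoted from Silva, Magesan, Kribs and Emerson \cite{twirl1}, so the only available comparison is with that reference's proof, which works at the level of Kraus operators and the Knill--Laflamme conditions. There, the twirl $\bar{\mathcal{N}}=\frac{1}{|G|}\sum_g \mathcal{U}_g^\dagger\circ\mathcal{N}\circ\mathcal{U}_g$ has Kraus operators $\bigl\{|G|^{-1/2}U_g^\dagger E_k U_g\bigr\}_{g,k}$, so exact correctability of $\bar{\mathcal{N}}$ on a code with projector $P$ gives $P\,U_g^\dagger E_j^\dagger U_g U_h^\dagger E_k U_h\,P\propto P$ for all pairs, which contains the Knill--Laflamme conditions for each conjugated channel and, taking $g$ to be the identity, for $\mathcal{N}$ itself. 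Your route through the Choi-state/entanglement-fidelity characterization of the identity channel is genuinely different and equally valid: the decisive step --- that a convex combination of channels with strictly positive weights equal to $\mathrm{id}$ forces every term to be $\mathrm{id}$ --- is exactly the statement that the Choi matrix of $\mathrm{id}$ is pure, and your affinity-plus-saturation argument establishes it cleanly; what this buys is an operational proof that never touches the Kraus representation or the error-correction conditions, at the modest cost of having to treat the continuous (Haar) twirl by the full-support-plus-continuity argument you sketch. Two points deserve explicit care. First, your specialization to $C=I$ needs the twirling set to contain the identity (automatic for a group twirl), and with it you in fact obtain something slightly stronger than the lemma as stated: the \emph{same} recovery $\mathcal{R}$ corrects $\mathcal{N}$ with no additional unitary, the ``additional unitary correction'' only entering if one works in a rotated frame $\mathcal{E}_C,\mathcal{R}_C$ or twirls over a set not containing $I$. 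Second, you should state explicitly that each $\mathcal{R}\circ\mathcal{N}_C\circ\mathcal{E}$ is itself trace preserving (it is, being a composition of CPTP maps), since the bound $F\le 1$ with equality only at $\mathrm{id}$ is used for normalized Choi states.
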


Hence, the depolarizing channel is not only interesting because of its nice properties, but also as error correction codes can be designed by using it. The depolarizing parameter and the parameters of the original channel are related in a specific way as a result of the twirl (see \cite{josuchannels,TVQC} for specific details on the qubit case). Notably, twirling channels into Pauli channels, whose symmetric version is the depolarizing channel, has recently been used for the quantum error mitigation technique named Probabilisitc Error Cancellation (PEC) \cite{PEC}. 

Consequently, studying the achievable rates for the different quantum information theoretical tasks over depolarizing channels is of the outmost importance. Studying the different capacities of such family of channels is also interesting from the point of view of quantum information theory too since the capacities of twirled channels lower bound the capacities of the channels from which they originated \cite{depAnti} and, thus, interesting lower bounds on the achievable rates of general channels might be obtained.

The channel coherent information, $Q_{\mathrm{coh}}$, defined in equation \eqref{eq:cohdef}, for qudit depolarizing channels is \cite{UB1}
\begin{equation}\label{eq:coherentQubits}
\begin{split}
& Q_{\mathrm{coh}}(\Lambda_p^d) =\max \left\lbrace 0, \log_2{d}\right. \\ & + \left(1-p\frac{d^2 - 1}{d^2}\right)\log_2{\left(1-p\frac{d^2 - 1}{d^2}\right)} \\ & \left. + p\frac{d^2 - 1}{d^2}\log_2{\left(\frac{p}{d^2}\right)}\right\rbrace,
\end{split}
\end{equation}
with units of qubits per channel use. It provides a lower bound for the quantum channel capacity, $C_\mathrm{Q}(\mathcal{N})\geq Q_{\mathrm{coh}}(\mathcal{N})$.
Note that by changing the $\log_2$ in the above expression by $\log_d$, the units of $ Q_{\mathrm{coh}}(\Lambda_p^d)$ are qudits per channel use. The reason to consider this units is that we are interested in studying the logical qudits per physical qudits, i.e. coding rate, that can be achieved for a qudit error correction scheme, and not the amount of logical qubits that can be encoded by means of qudits. For the sake of notation we will denote the channel coherent information in such units by $Q^d_{\mathrm{coh}}(\Lambda_p^d)$.

Recall that for $p< \frac{d}{2(d+1)}$ the channel does not belong to any of the classes with proven additive channel coherent information \cite{degDep}, implying that the quantum channel capacity is not known and may exhibit superadditivity gains. In fact, these gains have been obtained in previous works \cite{wildeQIT, super1,super2,neuralSuper}. Several techniques have been developed in order to obtain upper bounds for the quantum channel capacity of  $d$-dimensional depolarizing channels \cite{UB1,UB2,UB3,UB4,UB5}. Each of those upper bounds are tighter depending on the region of depolarizing probability considered in $p\in\left[0,\frac{d}{2(d+1)}\right]$. The tightest upper bound is usually obtained by using the fact that the convex hull of the upper bounds is itself an upper bound \cite{UB2}. However, for the purposes of this work, we will consider the so called no-cloning bound, $Q_{\mathrm{nc}}$. The no-cloning bound on quantum capacity is based on combining Cerf's no-cloning bounds \cite{nc2} and the degradable extension technique of \cite{UB6}. Cerf's results lay on the no-cloning theorem\footnote{A unitary operator that perfectly copies arbitrary quantum states cannot be constructed.} of quantum mechanics for determining that Pauli channels (depolarizing channels are an specific instance of this) cannot have a positive capacity under certain conditions. By using this result, the bound can be obtained by the techniques in \cite{UB6}. A proof for this can be found in \cite{depAnti}. The no-cloning bound upper bounds the quantum channel capacity of qudit depolarizing channels as \cite{depAnti,UB2,nc1,nc2,nc3}
\begin{equation}\label{eq:nocloningQubits}
C_\mathrm{Q}(\Lambda_p^d)\leq Q_{\mathrm{nc}}(\Lambda_p^d) = \left(1-2p\frac{d+1}{d}\right)\log_2{d},
\end{equation}
with units of qubits per channel use. Note that the expresion of $Q_{\mathrm{nc}}(\Lambda_p^d)$ in qudits per channel use reduces to
\begin{equation}\label{eq:nocloningQudits}
Q_{\mathrm{nc}}^d(\Lambda_p^d) = \left(1-2p\frac{d+1}{d}\right).
\end{equation}

\section{Superadditivity gain}\label{sec:supergain}
As explained in the previous section, the potential superadditive nature of the coherent information may lead to quantum channel capacities that are higher than the one-shot channel coherent information. In other words, there exists a gain in quantum channel capacity if several quantum channel uses are considered. Remarkably, it has been proven that even an unbounded number of channels uses may be required for this effect to happen \cite{unbounded}. In order to quantify this gain we define the superadditivity gain, $\xi$, as
\begin{equation}\label{eq:supergain}
\xi(\mathcal{N}) = C_\mathrm{Q}(\mathcal{N}) - Q_\mathrm{coh}(\mathcal{N}),
\end{equation}
which gives the additional qubits per channel that the channel capacity has when compared the achievable rate of the channel coherent information. Clearly, if the coherent information of the channel is additive, then  $\xi(\mathcal{N}) =0$. Knowledge about the quantum channel capacity is needed in order to compute the superadditivity gain in equation \eqref{eq:supergain} and, as stated before, the quantum capacity of qudit depolarizing channels is still unknown. However, upper bounds on such quantity can be obtained using the upper bounds derived in \cite{UB1,UB2,UB3,UB4,UB5,UB6}. For the purposes of this work we will upper bound the superadditivity gain by using the no-cloning bound as
\begin{equation}\label{eq:supergainnocloning1}
\xi_\mathrm{nc}(\Lambda_p^d) = Q_{\mathrm{nc}}(\Lambda_p^d) - Q_\mathrm{coh}(\Lambda_p^d)\geq \xi(\Lambda_p^d).
\end{equation}
The units in the above expression are qubits per channel use. However, we will study the capacity gain with qudits per channel use units in order to have a fair comparison of the extra capacity that is obtained via superadditive effects. In this way, we will be able to see how many more qudits per channel use can be potentially obtained due to superadditive effects, which is more fair to compare those effects for different dimensions, since operating in more dimensions trivially implies that more information (in terms of qubits) can be encoded in a single quantum state.  For example, consider $d_1<d_2$ and assume that their superadditivity gains in qudits per channel use (coding rate) for both cases is the same. That is, $\xi(\Lambda_p^{d_1})=\xi(\Lambda_p^{d_2})=g$. However, these gains become  $g\log_2(d_1) < g\log_2(d_2)$ when expressed in qubits per channel use, making the impression that the capacity of for $d_2$ increases more. Note that whenever qudit error correction codes are constructed, their coding rate will have logical qudits per physical qudits units, implying that the extra rate obtained via superadditivity should be quantified in such terms.

Therefore, in what follows, the units of the superadditive gains will be given in qudits per channel use, that is
\begin{equation}\label{eq:supergainnocloning}
\xi_\mathrm{nc}(\Lambda_p^d) = Q^d_{\mathrm{nc}}(\Lambda_p^d) - Q^d_\mathrm{coh}(\Lambda_p^d)\geq \xi(\Lambda_p^d).
\end{equation}

\section{Superadditivity effects of quantum capacity decrease as a function of the dimension}\label{eq:results}
We now provide the main result of this article. 
\begin{thm}\label{thm:decrease}
\textit{Let $d_l$ be an arbitrary positive integer higher than $2$ and $p_0^{d_l}\in \mathbb{R}$ defined as
\begin{equation}\label{eq:zerodep}
p_0^{d_l} = \min_p\left(\left\lbrace p\in\left(0,\frac{d_l}{2(d_l+1)}\right) : Q_\mathrm{coh}^{d_l}(\Lambda_p^{d_l}) = 0 \right\rbrace\right).
\end{equation}
That is, $p_0^{d_l}$ is the smallest depolarizing probability that makes the coherent information of the $d_l$-dimensional depolarizing channel equal to zero. 
Then, for any depolarizing probability $p$ in the range $p\in(0,p_0^{d_l})$, the superadditivity gain, 
 $\xi_\mathrm{nc}(\Lambda_p^d)$, in qudits per channel use units is a monotonically decreasing function of the channel dimension, $d$, for $d\geq d_l$.
}
\end{thm}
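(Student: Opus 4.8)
The plan is to treat the dimension $d$ as a continuous real parameter and prove that, for each fixed $p\in(0,p_0^{d_l})$, the gain $\xi_{\mathrm{nc}}(\Lambda_p^d)$ is a strictly decreasing function of $d$ on $[d_l,\infty)$; monotone decrease over the admissible integers then follows. First I would write the gain explicitly in qudits per channel use. Replacing $\log_2$ by $\log_d$ in \eqref{eq:coherentQubits} gives $\log_d d=1$ and $\log_d(p/d^2)=\log_d p-2$, so, abbreviating $q:=p\,\frac{d^2-1}{d^2}$ and using \eqref{eq:nocloningQudits}, the additive $1$'s cancel and the term $2q$ combines with the linear part of $Q^d_{\mathrm{nc}}$ to produce the rational term $-2p(d+1)/d^2$, leaving
\begin{equation}\label{eq:xisimplified}
\xi_{\mathrm{nc}}(\Lambda_p^d)= -\,2p\,\frac{d+1}{d^{2}}\;-\;(1-q)\log_d(1-q)\;-\;q\log_d p ,
\end{equation}
valid wherever $Q^d_{\mathrm{coh}}(\Lambda_p^d)>0$. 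Since $0<q<p<\tfrac12$, both logarithms are negative, so the last two summands are positive and the rational one is negative; as $d\to\infty$ one has $q\to p$ and every summand vanishes, which already displays $\xi_{\mathrm{nc}}\to0$ in line with the global claim.

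Next I would dispose of the outer maximum. Let $\Xi(d)$ denote the smooth function given by the right-hand side of \eqref{eq:xisimplified}, i.e. $Q^d_{\mathrm{nc}}$ minus the \emph{interior} bracket of the coherent information; it is well defined for all real $d\ge d_l$ because $q<1$. The no-cloning term $Q^d_{\mathrm{nc}}(\Lambda_p^d)=1-2p-2p/d$ is manifestly increasing in $d$. The key structural observation is a bootstrap: if $\Xi'(d)<0$ on $[d_l,\infty)$, then the interior coherent information $Q^d_{\mathrm{nc}}-\Xi$ is increasing in $d$; since, by the definition of $p_0^{d_l}$ as the \emph{smallest} zero, it is strictly positive throughout $(0,p_0^{d_l})$ at $d=d_l$, it remains positive for all $d\ge d_l$. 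Hence the maximum in \eqref{eq:coherentQubits} is inactive on the whole range, $\xi_{\mathrm{nc}}=\Xi$ there, and monotone decrease of $\Xi$ proves the theorem. This argument removes any need to track the threshold $p_0^{d}$ for $d>d_l$ separately.

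The heart of the proof is therefore the single inequality $\Xi'(d)<0$. Differentiating \eqref{eq:xisimplified} with $dq/dd=2p/d^{3}$ and $\log_d x=\ln x/\ln d$, and clearing the positive factor $d^{3}(\ln d)^{2}$, reduces $\Xi'(d)<0$ to
\begin{equation}\label{eq:masterineq}
\begin{split}
&d^{2}\big[(1-q)\ln(1-q)+q\ln p\big]\\
&\quad+2p\ln d\,\big[1+\ln(1-q)-\ln p\big]\\
&\quad+2p(d+2)(\ln d)^{2}<0 .
\end{split}
\end{equation}
The first bracket is exactly the natural-logarithm integrand of the coherent information, a fixed negative number for fixed $p$, so the leading contribution is of order $-d^{2}$, whereas the two remaining terms are only of order $p\,d(\ln d)^{2}$ and $p(\ln d)\lvert\ln p\rvert$. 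Thus for large $d$ the negative $d^{2}$-term dominates and \eqref{eq:masterineq} holds; this is the quantitative mechanism behind ``superadditivity decreases with the dimension.''

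The hard part will be making \eqref{eq:masterineq} rigorous \emph{uniformly} down to the smallest admissible dimension, i.e. for every integer $d\ge d_l\ge3$ and every $p\in(0,p_0^{d_l})$, rather than only asymptotically. The difficulty is twofold: the competition is between an $O(d^{2})$ term and an $O(d(\ln d)^{2})$ term, whose ratio $(\ln d)^2/d$ is not small near $d=3$, and $\ln p\to-\infty$ as $p\to0$, so a crude uniform bound fails. I would control it by grouping the $\ln p$-terms, which net to $\ln p\,\big(p(d^{2}-1)-2p\ln d\big)<0$ and thus furnish a negative surplus of size $\propto p\lvert\ln p\rvert(d^{2}-1)$; this surplus, together with $-d^{2}(1-q)\ln(1-q)\ge d^{2}q(1-q)$ (from $-\ln(1-q)\ge q$) and the bound $0<q<p<\tfrac{d_l}{2(d_l+1)}<\tfrac12$, is then shown to dominate the positive $O(p\,d(\ln d)^{2})$ contributions. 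The tightest case is $d=d_l$ with $p\uparrow p_0^{d_l}$, where the estimate is close and a direct check is required. Establishing \eqref{eq:masterineq} on this two-dimensional parameter region is the only nontrivial computation; once it is in place, $\Xi$ is strictly decreasing, the bootstrap of the second paragraph applies, and $\xi_{\mathrm{nc}}(\Lambda_p^d)$ is a monotonically decreasing function of $d$ for $d\ge d_l$, as claimed.
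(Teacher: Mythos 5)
Your reduction is sound as far as it goes: the closed form for $\xi_\mathrm{nc}$ in qudits per channel use, the bootstrap that keeps the outer maximum in the coherent information inactive for all $d\ge d_l$ (which the paper handles only by asserting, without proof, that $p_0^d$ grows with $d$), and the master inequality you obtain after clearing $d^3(\ln d)^2$ are all correct. Your derivative is in fact the right one, whereas the derivative printed in the paper's proof simplifies to $-4p/d+4p(d^2-1)/d^3-Q^d_\mathrm{coh}(\Lambda_p^d)$, which tends to $-(1-2p)\neq 0$ as $d\to\infty$ even though $\xi_\mathrm{nc}\to 0$, so it cannot be the derivative of $\xi_\mathrm{nc}$; the paper's argument also never uses $d_l>2$. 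The genuine gap in your proposal is that the master inequality --- which you yourself identify as ``the only nontrivial computation'' --- is never established, and the estimates you sketch do not close it where it is tight. At $d=3$ and $p=p_0^{3}\approx 0.287$ the left-hand side of your master inequality is about $-0.14$ against individual terms of size $\approx 4.8$, a margin of roughly three percent; replacing $-\ln(1-q)$ by $q$ loses about $0.26$ and discarding $\ln(1-q)$ from the middle bracket loses about another $0.19$, so the bounds you propose flip the sign precisely at the critical corner. The delicacy is real: at $d=2$ and $p=0.25<p_0^{2}\approx 0.2524$ the continuous derivative is actually (slightly) positive, so any correct proof of $\Xi'(d)<0$ must invoke $d\ge 3$ quantitatively, and nothing in your outline does.

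The concluding ``direct check'' is also not a single verification. The tight regime is the whole curve $d=d_l$, $p\uparrow p_0^{d_l}$ for every integer $d_l\ge 3$ (with $p_0^{d_l}\to 1/2$), so you need either a bound whose slack grows with $d_l$, or an explicit threshold $d^{*}$ beyond which a crude estimate suffices together with a rigorous treatment of the finitely many dimensions below it --- and for each such dimension the check is still over a continuum of $p$. This is likely repairable (keep the $\ln(1-q)$ term in the middle bracket, use $-\ln(1-q)\ge q+q^{2}/2$, and exploit $\ln(1/p)\ge\ln(1/p_0^{d_l})$ together with $d\ge d_l$), but as written the analytic heart of the theorem is missing. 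In short: your decomposition is the correct starting point and is more careful than the paper's own one-line differentiation, but neither your sketch nor the paper's printed argument currently constitutes a complete proof of the monotonicity claim.
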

\begin{proof}
To prove the theorem, we must prove that 
\begin{equation}
\frac{\partial \xi_\mathrm{nc}(\Lambda_p^d)}{\partial d} < 0, \quad \forall p\in\left(0,p_0^{d_l}\right).
\end{equation}
Thus, the derivative of $\xi_\mathrm{nc}(\Lambda_p^d)$ over the dimension in the range $p\in\left(0,p_0^{d_l}\right)$
\begin{equation}
\begin{split}
\frac{\partial \xi_\mathrm{nc}(\Lambda_p^d)}{\partial d} &= -1 - 4\frac{p}{d} + 4p\frac{(d^2-1)}{d^3}  - p\frac{(d^2-1)\log_2{\left(\frac{p}{d^2}\right)}}{d^2\log_2{d}} \\ & -\frac{(1-p\frac{d^2 - 1}{d^2})\log_2{\left(1-p\frac{d^2 - 1}{d^2}\right)}}{\log_2(d)} \\ & = - 4\frac{p}{d} + 4p\frac{(d^2-1)}{d^3} - Q_{\mathrm{coh}}^d(\Lambda_p^d)<0.
\end{split}
\end{equation}
The last inequality follows from the fact that $4\frac{p}{d} > 4p\frac{(d^2-1)}{d^3}, \forall d$ (this inequality reduces to $\frac{1}{d}> \frac{1}{d}-1$ which is true for all $d>0$), and the fact that $\forall d\geq d_l$, $Q_{\mathrm{coh}}^d(\Lambda_p^d)\geq 0$, since $p_0^d$ increases with $d$ and we are considering the range $p\in\left(0,p_0^{d_l}\right)$.
\end{proof}

Figure \ref{fig:maxdecrease_p} graphically shows the results of this theorem. It plots the no-cloning superadditivity gain versus depolarizing probability, $p$, for four different $d_l$ dimensions. For a given $d_l$, the vertical dashed lines give the value of the corresponding $p_0^{d_l}$.
\begin{figure}[!ht]
\centering
\includegraphics[width=\linewidth]{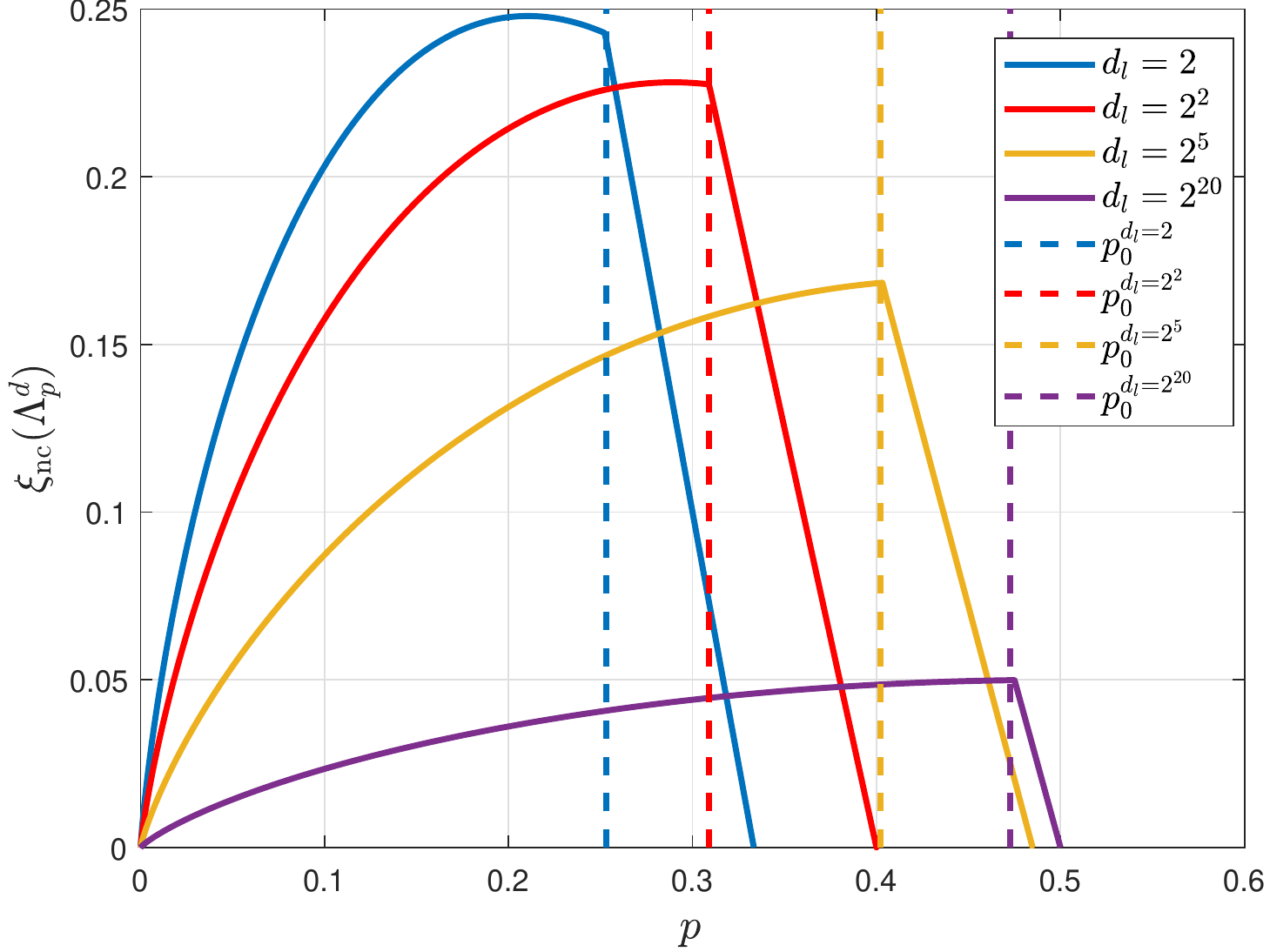}
\caption{\textbf{No-cloning superadditivity gain as a function of depolarizing probability $p$.} Channel dimensions $d\in \{2,2^2,2^5,2^{20}\}$ are plotted.}
\label{fig:maxdecrease_p}
\end{figure}

Note the result of Theorem \ref{thm:decrease} states that for an initial dimension $d_l$, the no-cloning superadditive gain, $\xi_\mathrm{nc}(\Lambda_p^d)$ is a decreasing function with respect to the dimension $d\geq d_l$ in the depolarizing probability range  $p\in(0,p_0^{d_l})$. It is noteworthy that the result of the theorem can be extended to a non-tirival region where the coherent information vanishes. However, since the point of maximum potential superadditivity lays in the region considered, expanding the analysis to such parameter space would result in similar conclusions. Additionally, the upper limit of such range, $p_0^{d_l}$, increases with respect to the initial dimension in consideration. This value saturates to $1/2$ when the dimension of the system is left to grow indefinitely since
\begin{equation}\label{eq:p0sat}
\begin{split}
\lim_{d\rightarrow\infty} Q_{\mathrm{coh}}^d(\Lambda_p^d) =& \lim_{d\rightarrow\infty}  \left(1 + \frac{\left(1-p\frac{d^2 - 1}{d^2}\right)\log_2{\left(1-p\frac{d^2 - 1}{d^2}\right)}}{\log_2{d}} \right. \\ & \left. + \frac{p\frac{d^2 - 1}{d^2}\log_2{\left(\frac{p}{d^2}\right)}}{\log_2{d}} \right)  = 1-2p,
\end{split}
\end{equation}
which vanishes at the value of $1/2$.

In this way, by starting with the minimum dimension of a quantum system, i.e. a qubit $d_l=2$, we can always find another initial higher dimension for which the no-cloning superadditive gain decreases in all the range of depolarizing probabilities $p\in(0,1/2)$. For example, see that in Figure \ref{fig:maxdecrease_p} we can change from $d_l=2$ to $d_l=4$ once we reach $p_0^{d_l=2}$, and the gain will still be decreasing for $d>d_l=4$. This can be done each time we reach a particular $d_l$. Thus, we effectively prove that whenever the dimension of the system increases, the room left for superadditive effects in qudits per channel use units decreases. Note also that the region $p\in(0,1/2)$ is actually the only region where superadditivity may happen for every $d$-dimensional depolarizing channels since for $p=0$ there is no noise, implying that $C_\mathrm{Q}^d(\Lambda_p^d) = 1$, while for $p>1/2$ every qudit depolarizing channel is antidegradable since $\lim_{d\rightarrow\infty}d/(2(d+1)) = 1/2$.

Figure \ref{fig:maxdecrease} showcases the decrease of the no-cloning superadditive gain for different depolarizing probabilities $p\in\{0.01,0.05,0.1,0.2,0.25\}$ as a function of the dimension of the system considered.

\begin{figure}[!ht]
\centering
\includegraphics[width=\linewidth]{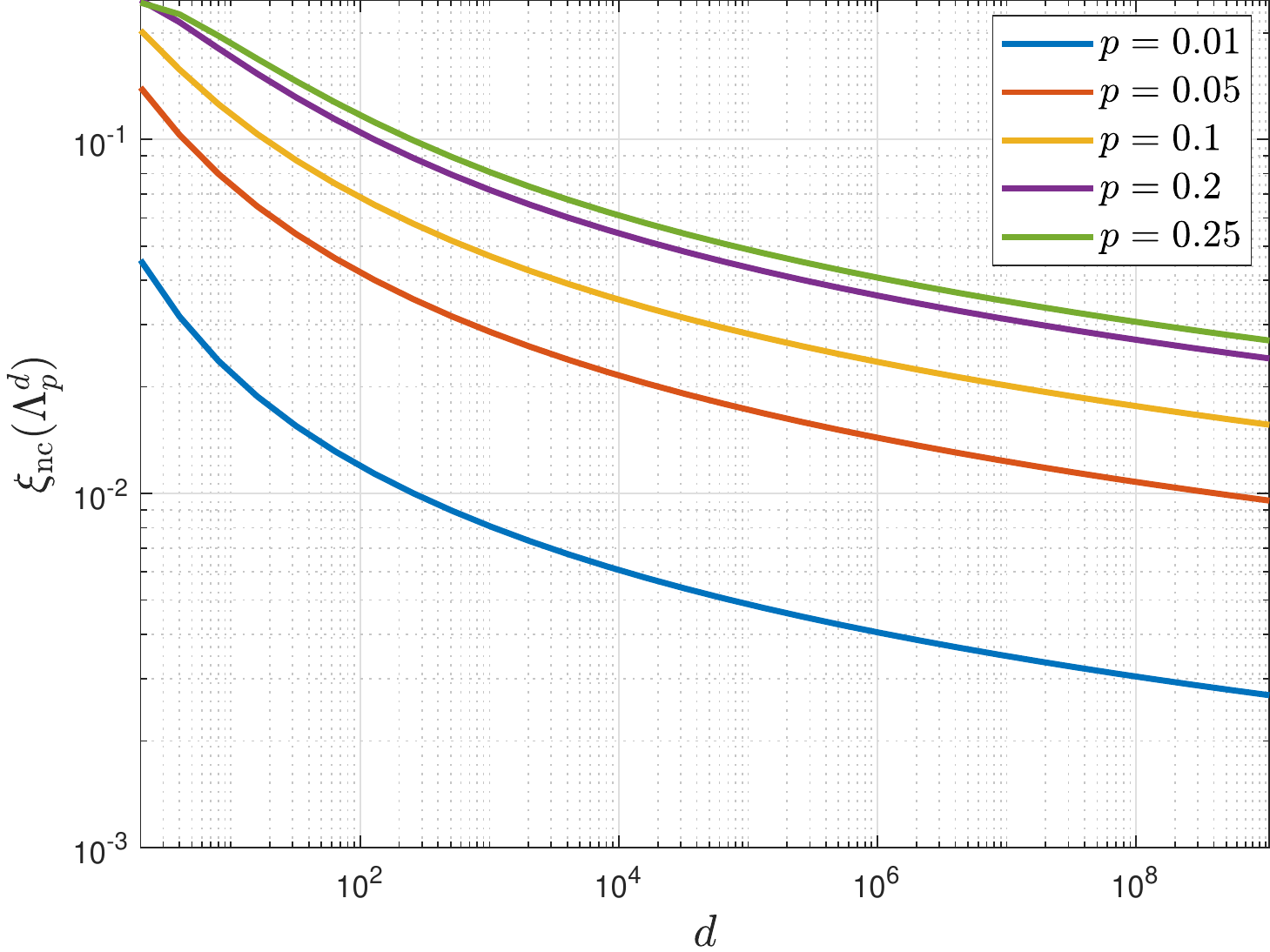}
\caption{\textbf{No-cloning superadditivity gain as a function of dimension and depolarizing probability.} We plot the superadditivity gain in terms of qudits per channel use as a function of the dimension of the depolarizing channel for $p\in\{0.01,0.05,0.1,0.2,0.25\}$.}
\label{fig:maxdecrease}
\end{figure}

Two important conclusions are derived from  Theorem \ref{thm:decrease}, which are clearly appreciated in the above two \texttt{figures}. The first conclusion is that whenever quantum systems of high dimensions are corrupted by the operation of a qudit depolarizing channel, the non-additive behaviour of the coherent information is less relevant. That is, the potential superadditivity gain in terms of qudits per channel use decreases. This is an important result for the depolarizing channel since it implies that for very high dimensional systems, the channel coherent information and the quantum channel capacity will be close together. Note that tighter bounds than the no-cloning bound can be used to bound the superadditivity gain, implying that the actual gain will be much smaller. This yields to the second conclusion which states that for high dimensional systems, the capacity of the depolarizing channel is close to the single-letter coherent information of the channel, that is, one can state that $C_\mathrm{Q}^d(\Lambda_p^d) \approx Q_{\mathrm{coh}}^d(\Lambda_p^d)$. Therefore, we can conclude that for such high dimensional systems, random block codes on the typical subspace of the optimal input (for the one-shot coherent information) will essentially achieve quantum channel capacity \cite{wildeQIT,leditzkyRandom}. This means that the best strategy to achieve the capacity of a depolarizing channels with sufficiently large dimension is by randomly selecting a stabilizer code \cite{wildeQIT}.

We have observed that the superadditive behaviour of coherent information loses importance when the dimensions of the qudit depolarizing channel increase. In particular, in the limit when $d$ is let to be infinite, the qudit becomes a quantum oscillator or bosonic mode \cite{qudits}, and the quantum channel capacity of the $\infty$-dimensional or bosonic depolarizing channel is given by $1-2p$, as it is shown in the following Corollary.
\begin{corollary}\label{col:inftyCap}
\textit{The quantum channel capacity of the $\infty$-dimensional or bosonic depolarizing channel is
\begin{equation}
C_\mathrm{Q}^d(\Lambda_p^\infty) = Q_\mathrm{coh}^d(\Lambda_p^\infty) =1-2p,
\end{equation}
with bosonic modes per channel use units for $p\in[0,1/2]$ and $0$ for $p\in[1/2,1]$.}
\end{corollary}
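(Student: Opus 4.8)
The plan is to establish the Corollary through a squeeze argument, exploiting the fact that for every finite dimension $d$ the quantum capacity is trapped between two bounds whose limits I can control. On one side, the channel coherent information is always an achievable rate, $C_\mathrm{Q}^d(\Lambda_p^d) \geq Q_{\mathrm{coh}}^d(\Lambda_p^d)$; on the other, the no-cloning bound of equation \eqref{eq:nocloningQudits} is an upper bound, $C_\mathrm{Q}^d(\Lambda_p^d) \leq Q_{\mathrm{nc}}^d(\Lambda_p^d) = 1 - 2p\frac{d+1}{d}$. Both inequalities hold simultaneously for all $d$, so it suffices to show that the two brackets converge to the same value as $d\rightarrow\infty$.

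First I would take the limit of the upper bound, which is immediate:
\begin{equation}
\lim_{d\rightarrow\infty} Q_{\mathrm{nc}}^d(\Lambda_p^d) = \lim_{d\rightarrow\infty}\left(1 - 2p\frac{d+1}{d}\right) = 1-2p.
\end{equation}
Next I would invoke equation \eqref{eq:p0sat}, which already shows that the lower bracket tends to the identical value, $\lim_{d\rightarrow\infty} Q_{\mathrm{coh}}^d(\Lambda_p^d) = 1-2p$. Since the coherent information and the no-cloning bound meet, the squeeze theorem forces $\lim_{d\rightarrow\infty} C_\mathrm{Q}^d(\Lambda_p^d) = 1-2p$, and identifying the $\infty$-dimensional limit with the bosonic depolarizing channel yields $C_\mathrm{Q}^d(\Lambda_p^\infty) = Q_\mathrm{coh}^d(\Lambda_p^\infty) = 1-2p$ in bosonic modes per channel use. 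This is the desired expression throughout the region $p\in[0,1/2]$, where $1-2p\geq 0$.

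For the complementary region $p\in[1/2,1]$ the squeeze is no longer needed, and I would instead appeal to antidegradability. Recall that the $d$-dimensional depolarizing channel is antidegradable whenever $p\geq\frac{d}{2(d+1)}$, and $\lim_{d\rightarrow\infty}\frac{d}{2(d+1)} = 1/2$; hence the bosonic limit is antidegradable for every $p>1/2$, which forces its quantum capacity to vanish, while at $p=1/2$ both the threshold and the value $1-2p$ collapse to the boundary value $0$. This completes the piecewise characterization.

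The step I expect to require the most care is not the limiting arithmetic but the interchange of the $d\rightarrow\infty$ limit with the capacity itself: a priori $\lim_{d\rightarrow\infty} C_\mathrm{Q}^d(\Lambda_p^d)$ need not coincide with the capacity of the genuinely infinite-dimensional channel. The squeeze sidesteps this subtlety, because in the limit the one-shot achievable rate already saturates the no-cloning ceiling, so no regularization gap can survive and the superadditivity gain $\xi_\mathrm{nc}(\Lambda_p^d)$ is driven to zero. In this sense the Corollary is exactly Theorem \ref{thm:decrease} pushed to its extremal $d\rightarrow\infty$ endpoint.
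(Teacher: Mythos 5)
Your proposal is correct and follows essentially the same sandwich argument as the paper: the coherent information of equation \eqref{eq:p0sat} from below and the no-cloning bound of equation \eqref{eq:nocloningQudits} from above both converge to $1-2p$, with antidegradability handling $p\in[1/2,1]$. Your added caution about identifying $\lim_{d\rightarrow\infty} C_\mathrm{Q}^d(\Lambda_p^d)$ with the capacity of the genuinely infinite-dimensional channel is a reasonable refinement, but the core argument is the one the paper gives.
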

\begin{proof}
We use a sandwich argument to prove the corollary. We know from equation \eqref{eq:p0sat} that the coherent information of the depolarizing channel has the following asymptotic behaviour in the region $p\in[0,1/2]$
\begin{equation}
C_\mathrm{Q}^d(\Lambda_p^\infty) \geq Q_\mathrm{coh}^d(\Lambda_p^\infty)= \lim_{d\rightarrow \infty}Q_\mathrm{coh}^d(\Lambda_p^d) = 1-2p.
\end{equation}
In addition, if we study the aymptotic behaviour of the no-cloning bound in equation \eqref{eq:nocloningQudits}, then
\begin{equation}
C_\mathrm{Q}^d(\Lambda_p^\infty) \leq \lim_{d\rightarrow \infty}\left(1-2p\frac{d+1}{d}\right) = 1-2p,
\end{equation}
which completes the sandwich and, thus,
\begin{equation}
C_\mathrm{Q}^d(\Lambda_p^\infty) = Q_\mathrm{coh}^d(\Lambda_p^\infty) =1-2p.
\end{equation}

For the complementary region, $p\in[1/2,1]$, we know that this channel is antidegradable. Therefore, the quantum channel capacity vanishes.
\end{proof}

Consequently, it can be seen that the superadditive nature of the coherent information of the qudit depolarizing channel is lost when the dimension of the system is let to grow indefinitely, i.e. $\xi(\Lambda_p^\infty)=0,\forall p$. This result is specially interesting since it is an example of a channel not belonging to the degradable or conjugate degradable classes (the depolarizing channel does not belong to these families of channels), but showing channel coherent information with an additive behaviour. Knowledge about quantum channels presenting additive coherent information but not belonging to the classes of channels known to exhibit additivity is important to obtain a better understanding about the behaviour of quantum channel capacity. In this sense, understanding the structure of particular channels exhibiting additive coherent information may provide hints to understand general classes of channels with such property.

Other implication of Theorem \ref{thm:decrease} and Corollary \ref{col:inftyCap} is that whenever the dimension of the qudit depolarizing channels is big enough, the advantage that utilizing entangled inputs may provide for protecting quantum information losses importance. This comes from the fact that the non-additive effects of coherent information are a result of considering input states that are entangled\footnote{Note that this does not refer to entanglement-assistance, but to the fact that the inputs used over sequential uses of the same channel are entangled among them.} \cite{unbounded}. Since the channel coherent information of is approximately additive for suffficiently high system dimensions, then the use of entangled input states will provide almost no net gain. As entanglement is an expensive resource, this significantly relaxes the required resources for optimal quantum communication/correction over such channels.

To finish with this section, it is important to discuss what happens with the superadditive gain whenever it is considered in qubits per channel use units. As we discussed, we have considered qudits per channel use units since we wanted to study the extra rate achievable due to superadditivity whenever qudit error correction codes are considered, i.e. protecting logical qudits using physical qudits. However, sometimes the information rate in terms of qubits per channel use is also an important thing to study as, for example, when logical qubits want to be encoded by means of qudits \cite{quditQubit,quditQubit2} or when the noise in a system composed by $n$ qubits experiences a depolarizing channel of dimension equal to the Hilbert space of the whole system, i.e. $d=2^n$. The last example would refer to noise that is correlated, that is, noise that cannot be seen as independent noises acting over each of the qubits of the system (See section \ref{sec:implications} for further discussions). In this case, it is more convenient (in terms of calculations) to redefine the superaddivity gain as $\zeta(\mathcal{N}) = C_\mathrm{Q}(\mathcal{N})/Q_\mathrm{coh}(\mathcal{N})$ for obtaining the same results as before. Note for example that it is straightforward to see that $\lim_{d\rightarrow\infty} \zeta(\Lambda_p^d) = (C_\mathrm{Q}^d(\Lambda_p^d)\log_2{d})/(Q_\mathrm{coh}^d(\Lambda_p^d)\log_2{d}) = 1$, implying that the coherent information is additive. Unluckily, this redefinition of the gain poses some problems since it diverges for the region where the coherent information vanishes but the capacity is still strictly positive. However, since Theorem \ref{thm:decrease} considers only the region of positive coherent information and in Corollary \ref{col:inftyCap} such thing is true for the whole region, the same results are obtained. Anyway, we still consider $\xi(\mathcal{N})$ to be the appropriate way to define the superadditivity gain since it is able to capture the non-additivity effects of coherent information for the whole parameter region and, thus, discussed such quantity in terms of qudits per channel use.

\section{Relationship with other capacities}\label{sec:othecaps}
Generally speaking, quantum channels have many other quantum capacities associated with the optimal rate at which some information theoretic task can be performed. Therefore, in this section we discuss the superadditive gain for the classical and private capacities of qudit depolarizing channels.

The classical capacity of a quantum channel, $C_\chi(\mathcal{N})$, is defined as the asymptotically achievable rate of reliable transmission of classical information through the noisy channel \cite{classCap,classCap2,partialorders}. The classical capacity of a quantum channel is geven by the following regularized formula
\begin{equation}\label{eq:classcap}
C_\chi(\mathcal{N}) = \lim_{n\rightarrow\infty} \frac{1}{n} \chi(\mathcal{N}^{\otimes n}),
\end{equation}
where $\chi(\mathrm{N})$ is named the Holevo quantity \cite{partialorders} and it is calculated as
\begin{equation}\label{eq:holevo}
\chi(\mathcal{N}) = \sup_{\rho_{XA}} I(X;B)_{\rho},
\end{equation}
where $\rho_{XA}$ refer to pure classical-quantum states \cite{partialorders} and $I(A;B)_{\rho_{AB}} = S(\rho_A) + S(\rho_B) - S(\rho_{AB})$ is the quantum mutual information \cite{wildeQIT}. The mutual information is evaluated with the state $(\mathbb{I}_X\otimes\mathcal{N})(\rho_{XA})$. For arbitrary channels, the Holevo information is superadditive, implying that the regularization in equation \eqref{eq:classcap} is necessary \cite{wildeQIT,partialorders}. However, it is well known that the Holevo information of qudit depolarizing channels is additive, implying that the classical capacity of such families of channels is equal to the Holevo quantity \cite{dep1}. Therefore, the superadditivity gain of the classical capacity vanishes for all depolarizing probabilities.

The private capacity, $P(\mathcal{N})$, referres to the maximum achievable rate for private transmission of information over a quantum channel with an asymptotically vanishing error rate \cite{lsd3,partialorders,privateCap}. Such quantity can be evaluated as
\begin{equation}\label{eq:privCap}
P(\mathcal{N}) = \lim_{n\rightarrow\infty} \frac{1}{n} P_1(\mathcal{N}^{\otimes n}),
\end{equation}
where the one-shot private information is calculated as
\begin{equation}\label{eq:oneshotPriv}
P_1(\mathcal{N}) = \sup_{\rho_{UA}} I(U;B)_\rho - I(U;E),
\end{equation}
where $\rho_{UA}$ refer to mixed classical-quantum states \cite{partialorders}. The mutual informations are evaluated for states $(\mathbb{I}_U\otimes\mathcal{N}) (\rho_{UA})$ and $(\mathbb{I}_U\otimes\mathcal{N}^c) (\rho_{UA})$, respectively. Private capacity has also shown to be a superadditive quantity \cite{partialorders,superPriv}. Importantly, the private capacity upper bounds the unassisted quantum capacity of a quantum channel \cite{partialorders,lamiWilde,privUBcap}, i.e.
\begin{equation}\label{eq:privUBcap}
P(\mathcal{N})\geq C_\mathrm{Q}(\mathcal{N}),
\end{equation}
which also holds for the one-one shot capacities, i.e. $P_1(\mathcal{N}) \geq Q_\mathrm{coh}(\mathcal{N})$. The upper bound saturates for the class of more capable channels (which includes less noisy and degradable channels) \cite{watanabe}.

Moreover, the no-cloning bound in equation \eqref{eq:nocloningQubits} also upper bounds the private capacity of qudit depolarizing channels. We are unaware of a manuscript including this results and, thus, we provide a proof for it.

\begin{corollary}\label{col:nocloningPriv}
\textit{The no-cloning bound, $Q_{\mathrm{nc}}(\Lambda_p^d)$ is an upper bound for the private quantum capacity of qudit depolarizing channels, i.e.
\begin{equation}
P(\mathcal{N}) \leq Q_{\mathrm{nc}}(\Lambda_p^d) = \left(1-2p\frac{d+1}{d}\right)\log_2{d}.
\end{equation}}
\end{corollary}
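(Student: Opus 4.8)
The plan is to mirror the derivation of the no-cloning bound on the \emph{quantum} capacity, replacing the quantum capacity by the private capacity at every step, and to exploit the fact that the bound originates from a \emph{degradable} auxiliary channel for which the two capacities coincide. Recall that, as noted below equation \eqref{eq:privUBcap}, the inequality $P(\mathcal{N})\geq C_\mathrm{Q}(\mathcal{N})$ saturates for degradable channels, so that any degradable channel $\hat{\Lambda}$ satisfies $P(\hat{\Lambda})=C_\mathrm{Q}(\hat{\Lambda})=Q_\mathrm{coh}(\hat{\Lambda})$, the last equality being the single-letter characterization of degradable channels. This is the property that will let us transport the bound from the quantum to the private capacity, since an upper bound on $C_\mathrm{Q}$ does \emph{not} in general bound $P$ from above (the ordering $P\geq C_\mathrm{Q}$ points the wrong way).

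First I would recall the explicit degradable-extension construction underlying $Q_{\mathrm{nc}}$, namely the technique of \cite{UB6} combined with Cerf's argument \cite{nc2}. There one builds a degradable channel $\hat{\Lambda}_p^d$ such that the depolarizing channel is recovered as a post-processing of it performed by the legitimate receiver, $\Lambda_p^d=\mathcal{P}\circ\hat{\Lambda}_p^d$ for some CPTP map $\mathcal{P}$ (concretely, discarding a flag register), and such that its one-shot coherent information equals the no-cloning value, $Q_\mathrm{coh}(\hat{\Lambda}_p^d)=Q_\mathrm{nc}(\Lambda_p^d)$.

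Next I would invoke the monotonicity of the private capacity under output post-processing: since $\Lambda_p^d$ is obtained from $\hat{\Lambda}_p^d$ by an operation the receiver can implement on his own system, one has $P(\Lambda_p^d)\leq P(\hat{\Lambda}_p^d)$. Because $\hat{\Lambda}_p^d$ is degradable, the saturation recalled above gives $P(\hat{\Lambda}_p^d)=Q_\mathrm{coh}(\hat{\Lambda}_p^d)=Q_\mathrm{nc}(\Lambda_p^d)$. Chaining the two relations yields $P(\Lambda_p^d)\leq Q_\mathrm{nc}(\Lambda_p^d)$, which is the claim; the antidegradable region $p\geq d/(2(d+1))$ is consistent with this, since there $P(\Lambda_p^d)=0$ while $Q_\mathrm{nc}(\Lambda_p^d)\leq 0$, so the substantive content of the corollary lives in the range where $Q_\mathrm{nc}$ is nonnegative.

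The main obstacle is the second step: one must verify that the auxiliary channel producing the no-cloning bound is genuinely degradable and that the depolarizing channel sits \emph{downstream} of it (i.e. is a degraded version), so that private-capacity monotonicity runs in the direction $P(\Lambda_p^d)\leq P(\hat{\Lambda}_p^d)$ rather than the reverse. Once the orientation of the degradation and the identity $Q_\mathrm{coh}(\hat{\Lambda}_p^d)=Q_\mathrm{nc}(\Lambda_p^d)$ are read off from \cite{UB6,nc2}, the rest is immediate, since the collapse $P=C_\mathrm{Q}=Q_\mathrm{coh}$ for degradable channels is exactly the ingredient already quoted in the discussion of equation \eqref{eq:privUBcap}.
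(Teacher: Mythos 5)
Your proposal is correct and follows essentially the same route as the paper: both rest on the Smith--Smolin degradable-extension construction underlying $Q_{\mathrm{nc}}$, combined with the fact that a degradable extension has $P=C_\mathrm{Q}=Q_\mathrm{coh}$ so that its one-shot coherent information bounds the private capacity of the downstream channel. The paper simply cites this chain as Theorem 3 of \cite{UB6}, whereas you unpack the monotonicity and saturation steps explicitly; the content is the same.
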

\begin{proof}
Note that the $d$-dimensional depolarizing channel is both degradable and antidegradable when $p=\frac{d}{2(d+1)}$. Following the rationale in \cite{depAnti}, we can invoke Smith and Smolin's technique of degradable extensions \cite{UB6} to obtain the upper bound given in the corollary by noting that if the additive extension is degradable, then its coherent information does also upper bound the private capacity of the channel (Theorem 3 in \cite{UB6}).
\end{proof}

Similarly as done with the quantum channel capacity, we will define the normalized private capacity as $P^d(\mathcal{N}) = P(\mathcal{N})/\log_2{d}$. Here, the operational meaning of this quantity will be the number of private bits that can be reliably sent per qubit channel use. Note that, since having a higher dimensional system implies that more classical information can be packed, by normalizing this quantity we can more fairly compare how many extra private bits can be achieved due to superadditivity effects when comparing different dimensional depolarizing channels. 

Using the no-cloning bound upper bound on the private capacity, we extend the results of Theorem \ref{thm:decrease} and Corollary \ref{col:inftyCap} for the private capacity of qudit depolarizing channels. 

\begin{corollary}\label{col:decrasePriv}
\textit{The normalized private capacity superadditivity gain of qudit depolarizing channels, $\xi^P(\Lambda_p^d)$, in units of private bits per two-dimensional channel use is upper bounded by $\xi_\mathrm{nc}(\Lambda_p^d)$, which is a monotonically decreasing function with $d$ for any depolarizing probability, $p$, in the range $p\in(0,p_0^{d_l})$, where $p_0^{d_l}$ is defined as in Theorem \ref{thm:decrease} with $d_l$ an arbitrary positive integer higher than $2$. Therefore, the potential gain that can be obtained from superadditive effects for the private capacity decrease with the dimension of the system.}

\textit{Moreover, the normalized private channel capacity of the $\infty$-dimensional or bosonic depolarizing channel coincides with its quantum capacity and is given by
\begin{equation}
P^d(\Lambda_p^\infty) = C_\mathrm{Q}^d(\Lambda_p^\infty) = 1-2p,
\end{equation} 
with private bits per two-dimensional channel use units for $p\in[0,1/2]$ and $0$ for $p\in[1/2,1]$.}
\end{corollary}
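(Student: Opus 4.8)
The plan is to sandwich the (normalized) private capacity between the two quantities already controlled in Theorem~\ref{thm:decrease} and Corollary~\ref{col:inftyCap}: the no-cloning bound $Q_{\mathrm{nc}}^d$ from above and the one-shot coherent information $Q_{\mathrm{coh}}^d$ from below. Both halves of the corollary then reduce to results already in hand.

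First I would define the normalized private superadditivity gain in direct analogy with $\xi(\mathcal{N})$, setting $\xi^P(\Lambda_p^d) = P^d(\Lambda_p^d) - P_1^d(\Lambda_p^d)$, where $P_1^d$ is the normalized one-shot private information. The two structural inequalities required are already available in the excerpt: Corollary~\ref{col:nocloningPriv} gives $P(\Lambda_p^d) \leq Q_{\mathrm{nc}}(\Lambda_p^d)$, and the one-shot form of \eqref{eq:privUBcap}, namely $P_1(\mathcal{N}) \geq Q_{\mathrm{coh}}(\mathcal{N})$, controls the subtracted term from below. Dividing through by $\log_2 d$ to pass to normalized units yields $P^d \leq Q_{\mathrm{nc}}^d$ and $P_1^d \geq Q_{\mathrm{coh}}^d$, and combining them with the correct orientation gives
\begin{equation}
\xi^P(\Lambda_p^d) = P^d(\Lambda_p^d) - P_1^d(\Lambda_p^d) \leq Q_{\mathrm{nc}}^d(\Lambda_p^d) - Q_{\mathrm{coh}}^d(\Lambda_p^d) = \xi_{\mathrm{nc}}(\Lambda_p^d).
\end{equation}
The monotone decrease of the right-hand side in $d$ over $p\in(0,p_0^{d_l})$ is then inherited verbatim from Theorem~\ref{thm:decrease}, which settles the first part.

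For the bosonic limit I would rerun the sandwich of Corollary~\ref{col:inftyCap}. The lower bound follows from \eqref{eq:privUBcap}, giving $P^d(\Lambda_p^\infty) \geq C_\mathrm{Q}^d(\Lambda_p^\infty) = 1-2p$ with the value taken from Corollary~\ref{col:inftyCap}; the upper bound follows from Corollary~\ref{col:nocloningPriv} together with $\lim_{d\rightarrow\infty} Q_{\mathrm{nc}}^d(\Lambda_p^d) = 1-2p$, giving $P^d(\Lambda_p^\infty) \leq 1-2p$. The two bounds coincide, forcing $P^d(\Lambda_p^\infty) = 1-2p = C_\mathrm{Q}^d(\Lambda_p^\infty)$ on $p\in[0,1/2]$. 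For $p\in[1/2,1]$ the channel is antidegradable, and since antidegradable channels permit no private transmission (the environment can reproduce everything the legitimate receiver obtains), the private capacity vanishes identically there.

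I expect the main work to be bookkeeping rather than substance: ensuring that the two input inequalities point in opposite directions so that their difference bounds $\xi^P$ from above, and that the $\log_2 d$ normalization is applied consistently to every term so the units (private bits per two-dimensional channel use) match those of $\xi_{\mathrm{nc}}$. The one external fact I would cite rather than derive is that antidegradability forces zero private capacity; the remainder follows mechanically from results already established above.
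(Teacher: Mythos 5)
Your proposal is correct and follows essentially the same route as the paper: both parts rest on sandwiching the private capacity between the no-cloning bound (Corollary \ref{col:nocloningPriv}) from above and the coherent information / quantum capacity from below, then invoking Theorem \ref{thm:decrease} for monotonicity and antidegradability for $p\geq 1/2$. Your direct two-sided sandwich for the bosonic limit is in fact slightly cleaner than the paper's detour through $\xi^P(\Lambda_p^\infty)=0$, and your explicit care with the $\log_2 d$ normalization tidies up a unit inconsistency present in the paper's own chain of inequalities.
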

\begin{proof}
Since we are restricting the depolarizing probabilities to the range $p\in(0,p_0^{d_l})$, we know that the superaddivity gain of the quantum capacity, $\xi_\mathrm{nc}(\Lambda_p^d)$, is a monotonically decreasing function with $d$ from Theorem \ref{thm:decrease}. Therefore, by taking into account the following chain of inequalities:
\begin{equation}
\begin{split}
\xi_\mathrm{nc}(\Lambda_p^d) & = Q_\mathrm{nc}^d(\Lambda_p^d) - Q_\mathrm{coh}^d(\Lambda_p^d)\geq P^d(\Lambda_p^d) - Q_\mathrm{coh}(\Lambda_p^d) \\& \geq P^d(\Lambda_p^d)- P_1(\Lambda_p^d) = \xi^P(\Lambda_p^d),
\end{split}
\end{equation}
and, therefore, the upper bound for the superadditivity gain of the quantum channel capacity upper bounds the superadditivity gain of the private capacity of qudit depolarizing channels too. Consequently, the possible room for increasing the achievable rate in a task of private classical communication over a qudit depolarizing channel decrases as the dimension of the system increases.

The second part of the corollary is straightforward from Corollary \ref{col:inftyCap} due to the fact that
\begin{equation}
\lim_{d\rightarrow\infty} \xi_\mathrm{nc}(\Lambda_p^d)=0,
\end{equation}
thus, $\xi^P(\Lambda_p^\infty) = 0$. This implies that $P^d(\Lambda_p^\infty) = \lim_{d\rightarrow\infty}Q_\mathrm{nc}(\Lambda_p^\infty) = 1-2p$, for $p\in[0,1/2]$. The complementary depolarizing parameter region is trivial from the fact that the channel is antidegradable and, thus, the private capacity vanishes.
\end{proof}

Note that the result imlying that the quantum channel capacity and the private capacity of the qudit depolarizing channels coincide when the dimension of the system is let to grow indefinitely is an interesting result since, at the time of writing, only the class of more capable (which includes the class of degradable channels) quantum channels present such equality \cite{watanabe,partialorders,singh}, while depolarizing channels are not more capable.

\section{Implications on quantum error correction with correlated depolarizing noise}\label{sec:implications}
As stated before, the $d$-dimensional depolarizing channel can be used to describe a noise map over a set of $n$ qubits for which the noise occurs in a very correlated manner. In this sense, this channel will have a dimension that is equal to the whole qubit system, i.e. $d=2^n$. In order to better understand the correlated noise model descirbed by the $d$-dimensional depolarizing channel for this systems, note that expression \eqref{eq:depol} can be rewritten as \cite{kathri}
\begin{equation}\label{eq:depolalt}
\Lambda_p^{d=2^n}(\rho) = (1-p+\frac{p}{2^{2n}})\rho + \frac{p}{2^{2n}}\sum_{\{\bar{j},\bar{k}\}\setminus\{\bar{0},\bar{0}\}} \mathrm{X}^{\bar{j}} \mathrm{Z}^{\bar{k}} \rho \mathrm{Z}^{\bar{k}}\mathrm{X}^{\bar{j}} ,
\end{equation}
where $\mathrm{X}^{\bar{j}} = \mathrm{X}^{j_1}\otimes\mathrm{X}^{j_2}\otimes\cdots\mathrm{X}^{j_n}$ and $\mathrm{Z}^{\bar{k}} = \mathrm{Z}^{k_1}\otimes\mathrm{Z}^{k_2}\otimes\cdots\mathrm{Z}^{k_n}$, and $\mathrm{X},\mathrm{Z}$ are the bit and phase flip Pauli matrices, respectively. By inspecting this expression, it can be observed that the $d$-dimensional depolarizing channel refers to a channel in which all the non-trivial Pauli elements of the $n$-fold Pauli group are applied in an equiprobable manner. Therefore, this channel represents a channel in which there exists a visible correlation in the Pauli errors that each of the qubits of the system experiences. A visual example of why this is said to be correlate can be seen in the fact that for this channel an error or weight $n$ would occur with a probability $p/2^{2n}$ while in an independent depolarizing channel, such probability would be gien by the product of the probability of error of the channel, i.e. $(p/4)^n$. Thus, such even is much more infrequent for the uncorrelated depolarizing channel.

Following this logic, consider for example a rotated planar surface code with distance $d=21$ \cite{surface} or a length $1000$ quantum turbo code \cite{qtc}. Those lengths refer to quantum error correction codes with a good performance. Note that for the dimensions of the whole system for such codes will be $d=2^{441}$ and $2^{1000}$, respectively. Thus, those codes have humongous dimensionality. Quantum error correction codes are presumed to operate over a large quantity of qubits, similarly as the examples provided, hence, if the noise experienced by those systems ha a signifcant correlation, that is, similar to the depolarizing channel presented in Eq. \eqref{eq:depolalt}; then there will no be possible superadditive effects and the optimal communication/correction rates will be achieved by random stabilizer codes.
\section{Conclusion}
In this article we have studied how the potential superadditivity effects of the quantum capacity of the qudit depolarizing channel relate to the dimension of the quantum systems in consideration. We proved that whenever the dimension of the $d$-dimensional depolarizig channel increases, the potential gain in terms of qudits per channel use decreases. This is an important result since it implies that for very high dimensional systems the channel coherent information and the quantum channel capacity will be very similar for the depolarizing channel, which results in the fact that random block codes on the typical subspace of the optimal input will be capacity achieving. We also observed that when $\infty$-dimensional or bosonic depolarizing channels are considered, the coherent information results to be an additive quantity, making the superadditivity gain to vanish for all depolarizing probabilities. We proved that the private capacity of qudit depolarizing channels behaves similarly in the sense that its potential superadditivity gain decrases with the dimension of the system. Asymptotically, the ability of sending private classical information over such family of channels is also an additive quantity and, interestingly, it coicides with the previously discussed quantum channel capacity. We also discussed the fact that since high dimensional depolarizing channels exhibit additive coherent information, the use of entangled input states is not required for optimal quantum information protection for such cases, significantly relaxing the resources required.

We have conducted this analysis of the reduction of superadditivity effects for depolarizing channels, but we consider that this type of arguments can be used in order to study how superadditivity behaves in high dimensions for other quantum channels. Similar proofs for other general qudit channels could be potentially obtained by squeezing upper bounds for their capacities with their coherent informations. Also, it is noteworthy to state that the Clifford twirl of a general $d$-dimensional channel results in a qudit depolarizing channel \cite{twirldep}, implying that since its capacity lower bounds the capacity of the original channel, the results obtained here may be somehow extended.  This way it could be concluded if the gain in qudits per channel use decreases with respect to the dimension for every quantum channel that admits a seamless extension to $d$-dimensions, implying that seeking such effects should be restricted for low dimensional quantum channels. Additionally, the behaviour of other channel capacities such as the Local Operations and Classical Communications (LOCC)-assisted quantum capacity \cite{lamiWilde,partialorders}, $Q_\leftrightarrow$, or the secret-key agreement capacity (LOCC-assisted private capacity) \cite{lamiWilde,partialorders}, $P_\leftrightarrow$, can also be studied for the family of depolarizing channels and for general maps too. These thoughts are conjectures and are deemed as future work.

\begin{acknowledgments}
This work was supported by the Spanish Ministry of Economy and Competitiveness through the ADELE (Grant No. PID2019-104958RB-C44) and MADDIE projects (Grant No. PID2022-137099NB-C44), by the Spanish Ministry of Science and Innovation through the proyect Few-qubit quantum hardware, algorithms and codes, on photonic and solid-state systems (PLEC2021-008251), by the Ministry of Economic Affairs and Digital Transformation of the Spanish Government through the QUANTUM ENIA project call - QUANTUM SPAIN project, and by the European Union through the Recovery, Transformation and Resilience Plan - NextGenerationEU within the framework of the Digital Spain 2025 Agenda.
\end{acknowledgments}


\begin{thebibliography}{00}

\bibitem{shannon}
Shannon, C. E. A mathematical theory of communication. \emph{The Bell System Technical Journal} \textbf{27,} 3 (1948).

\bibitem{lsd1} 
Lloyd, S. Capacity of the noisy quantum channel. \emph{Phys. Rev. A} \textbf{55,} 1613 (1997).

\bibitem{lsd2} 
Shor, P. The quantum channel capacity and coherent information. \emph{Lecture Notes, MSRI Workshop on Quantum Computation} (2002).

\bibitem{lsd3} 
Devetak, I. The private classical capacity and quantum capacity of a quantum channel. \emph{IEEE Trans. Inf. Theory} \textbf{51,} 44-55 (2005).

\bibitem{lsd4} Barnum, H., Knill, E. \& Nielsen, M. A. On quantum fidelities and channel capacities. \emph{IEEE Trans. Inf. Theory} \textbf{46,} 1317-1329 (2000).

\bibitem{wildeQIT} Wilde, M. M. Quantum Information Theory. (Cambridge Univ. Press, Cambridge, 2017).

\bibitem{super1} DiVincenzo, D. P., Shor, P. W. \& Smolin, J. A. Quantum-channel capacity of very noisy channels. \emph{Phys. Rev. A} \textbf{57,} 830-839 (1998).

\bibitem{super2} Smith, G. \& Yard, J. Quantum communication with zero-capacity channels. \emph{Science} \textbf{321,} 1812–1815 (2008).

\bibitem{super3} Hastings, M. B. Superadditivity of communication capacity using entangled inputs. \emph{Nat. Phys.} \textbf{5,} 255–257 (2009).

\bibitem{unbounded}
Cubitt, T., Elkouss, D., Matthews, W. et al. Unbounded number of channel uses may be required to detect quantum capacity. \emph{Nat. Commun.} \textbf{6,} 6739 (2015).

\bibitem{siddhu} Siddhu, V. \& Griffiths, R. B. Positivity and Nonadditivity of Quantum Capacities Using Generalized Erasure Channels. \emph{IEEE Trans. Inf. Theory} \textbf{67,} 7 (2021).

\bibitem{super4} Smith, G. \& Smolin, J. A. Degenerate quantum codes for pauli channels. \emph{Phys. Rev. Lett.} \textbf{98,} 030501 (2007).

\bibitem{super5} Shor, P. W. \& Smolin, J. A. Quantum error-correcting codes need not completely reveal the error syndrome. \emph{arXiv} 9604006 (1996).

\bibitem{super6} Leditzky, F., Leung, D. \& Smith, G. Dephrasure Channel and Superadditivity of Coherent Information. \emph{Phys. Rev. Lett.} \textbf{121,} 160501 (2018).

\bibitem{super7} Bausch, J. \& Leditzky, F. Error Thresholds for Arbitrary Pauli Noise. \emph{SIAM Journal on Computing} \textbf{50,} 4 (2021).

\bibitem{super8} Filippov, S. N. Capacity of trace decreasing quantum operations and superadditivity of coherent information for a generalized erasure channel. \emph{J. Phys. A: Math. Theor.} \textbf{54,} 255301 (2021).

\bibitem{xzzx} Bonilla Ataides, J.P., Tuckett, D.K., Bartlett, S.D. et al. The XZZX surface code. \emph{Nat. Commun.} \textbf{12,} 2172 (2021).

\bibitem{neuralSuper} Bausch, J. \& Leditzky, F. Quantum codes from neural networks. \emph{New J. Phys.} \textbf{22,} 023005 (2020).

\bibitem{geneticSuper} Sidhardh, G. L., Alimuddin, M. \& Banik, M. Exploring superadditivity of coherent information of noisy quantum channels through genetic algorithms. \emph{Phys. Rev. A} \textbf{106,} 012432 (2022).

\bibitem{degOrig} Devetak, I. \& Shor, P. The Capacity of a Quantum Channel for Simultaneous Transmission of Classical and Quantum Information. \emph{Commun. Math. Phys.} \textbf{256,} 287–303 (2005).

\bibitem{degDep} Cubitt, T. S., Ruskai, M. B. \& Smith, G. The structure of degradable quantum channels. \emph{J. Math. Phys.} \textbf{49}, 102104 (2008).

\bibitem{watanabe} Watanabe, S. Private and quantum capacities of more capable and less noisy quantum channels. \emph{Phys. Rev. A} \textbf{85,} 012326 (2012).

\bibitem{cdeg} Br\'adler, K., Dutil, N., Hayden, P. \& Muhammad, A. Conjugate degradability and the quantum capacity of cloning channels. \emph{Journal of Mathematical Physics} \textbf{51,} 072201 (2010).

\bibitem{entBind} Horodecki, P., Horodecki, M. \& Horodecki, R. Binding entanglement channels. \emph{J.Mod.Opt.} \textbf{47,} 347-354 (2000).

\bibitem{ftvqc} {Etxezarreta Martinez}, J., Fuentes, P., deMarti iOlius, A., Garcia-Frías, J., Rodríguez Fonollosa, J. \& Crespo, P. M. Multiqubit time-varying quantum channels for NISQ-era superconducting quantum processors. \emph{Phys. Rev. Research} \textbf{5,} 033055 (2023).

\bibitem{platypus} Leditzky, F., Leung, D., Siddhu, V., Smith, G. \& Smolin, J. A. The platypus of the quantum channel zoo. \emph{2022 IEEE International Symposium on Information Theory (ISIT)} 2433-2438 (2022).

\bibitem{mad} Chessa, S. \& Giovannetti, V. Quantum capacity analysis of multi-level amplitude damping channels. \emph{Commun. Phys.} \textbf{4,} 22 (2021).

\bibitem{remad} Chessa, S. \& Giovannetti, V. Resonant multilevel amplitude damping channels. \emph{Quantum} \textbf{7,} 902 (2023).

\bibitem{josuchannels} Etxezarreta Martinez, J., Fuentes, P., Crespo, P. M. \& Garcia-Fr\'ias, J. Approximating Decoherence Processes for the Design and Simulation of Quantum Error Correction Codes on Classical Computers. \emph{IEEE Access} \textbf{8,} 172623-172643 (2020).

\bibitem{depAnti} Ouyang, Y. Channel covariance, twirling, contraction, and some upper bounds on the quantum capacity. \emph{Quantum Information and Computation} \textbf{14,} 11 (2014).

\bibitem{UB1} Fanizza, M., Kianvash, F. \& Giovanetti, V. Quantum Flags and New Bounds on the Quantum Capacity of the Depolarizing Channel. \emph{Phys. Rev. Lett.} \textbf{125,} 020503 (2020).

\bibitem{UB2} Fanizza, M., Kianvash, F. \& Giovanetti, V. Bounding the quantum capacity with flagged extensions. \emph{Quantum} \textbf{6,} 647 (2022).

\bibitem{UB3} Leditzky, F., Datta, N. \& Smith, G. Useful States and Entanglement Distillation. \emph{IEEE Trans. Inf. Theory} \textbf{64,} 7 (2018).

\bibitem{UB4} Wang, X. Pursuing the fundamental limits for quantum communication. \emph{IEEE Trans. Inf. Theory} \textbf{67,} 7 (2021).

\bibitem{UB5} Sutter, D., Scholz, V. B., Winter, A. \& Renner, R. Approximate Degradable Quantum Channels. \emph{IEEE Trans. Inf. Theory} \textbf{63,} 12 (2017).

\bibitem{UB6} Smith, G. \& Smolin, J. A. Additive extensions of a quantum channel. \emph{2008 IEEE Information Theory Workshop} 368-372 (2008).

\bibitem{qudits} Pirandola, S., Mancini, S., Braunstein, S. L. \& Vitali, D. Minimal qudit code for a qubit in the phase-damping channel. \emph{Phys. Rev. A} \textbf{77}, 032309 (2008).

\bibitem{dep1} King, C. The capacity of the quantum depolarizing channel. \emph{IEEE Trans. Inf. Theory} \textbf{49,} 1 (2003).

\bibitem{dep2} Bennett, C. H., DiVincenzo, D. P. \& Smolin, J. A. Capacities of Quantum Erasure Channels. \emph{Phys. Rev. Lett.} \textbf{78,} 3217 (1997).

\bibitem{dep3} Bennett, C. H., Shor, P. W., Smolin, J. A. \& Thapliyal, A. V. Entanglement-Assisted Classical Capacity of Noisy Quantum Channels. \emph{Phys. Rev. Lett.} \textbf{83,} 3081 (1999).

\bibitem{GKthm} Gottesman, D. The Heisenberg representation of quantum computers. \emph{Proceedings of the XXII International Colloquium on Group Theoretical Methods in Physics}, 32-43 (1999).

\bibitem{twirldep} Meier, A. M. Randomized benchmarking of Clifford operators. \emph{PhD Thesis}, University of Colorado (2018).

\bibitem{twirl1} Silva, M., Magesan, E., Kribs, D. W. \& Emerson, J. Scalable protocol for identification of correctable codes. \emph{Phys. Rev. A} \textbf{86,} 062318 (2012).

\bibitem{twirl2} Emerson, J., Silva, M., Moussa, O., Ryan, C., Laforest, M., Baugh, J., Cory, D. G. \& Laflamme, R. Symmetrized characterization of noisy quantum processes. \emph{Science} \textbf{317,} 5846 (2007).

\bibitem{twirl3} Dankert, C., Cleve, R., Emerson, J. \& Livine, E. Exact and approximate unitary-2-designs and their application to fidelity estimation. \emph{Phys. Rev. A} \textbf{80,} 012304 (2009).

\bibitem{TVQC} Etxezarreta Martinez, J., Fuentes, P., Crespo, P. \& Garcia-Frias, J. Time-varying quantum channel models for superconducting qubits. \emph{npj Quantum Inf.} \textbf{7,} 115 (2021).

\bibitem{PEC} van den Berg, E., Minev, Z.K., Kandala, A. et al. Probabilistic error cancellation with sparse Pauli–Lindblad models on noisy quantum processors. \emph{Nat. Phys.} (2023).

\bibitem{nc1} Bru\ss, D., DiVincenzo, D. P., Ekert, A., Fuchs, C. A., Macchiavello, C. \& Smolin, J. A. Optimal universal and state-dependent quantum cloning. \emph{Phys. Rev. A} \textbf{57}, 4 (1998).

\bibitem{nc2} Cerf, N. J. Pauli Cloning of a Quantum Bit. \emph{Phys. Rev. Lett.} \textbf{84,} 19 (2000).

\bibitem{nc3} Smith, G. Private classical capacity with a symmetric side channel and its application to quantum cryptography. \emph{Phys. Rev. A} \textbf{78,} 2 (2008).

\bibitem{leditzkyRandom} Leditzky, F., Leung, D. \& Smith, G. Quantum and private capacities of low-noise channels. \emph{Phys. Rev. Lett.} \textbf{120}, 160503 (2018).

\bibitem{quditQubit} Gottesman, D., Kitaev, A. \& Preskill, J. Encoding a qubit in an oscillator. \emph{Phys. Rev. A} \textbf{64,} 012310 (2001).

\bibitem{quditQubit2} Lim, S., Liu, J. \& Ardavan, A. Fault-tolerant encoding in single spin-7/2 qudit. \emph{arXiv} arXiv:2303.02084 (2023).

\bibitem{classCap} Holevo, A. S. The capacity of the quantum channel with general singlet states. \emph{IEEE Trans. Inf. Theory} \textbf{44,} 1 (1998).

\bibitem{classCap2} Schumacher, B. \& Westmoreland, M. D. Sending classical information via noisy quantum channels. \emph{Phys. Rev. A} \textbf{56,} 1 (1997).

\bibitem{privateCap} Cai, N., Winter, A. \& Yeung, R. W. Quantum privacy and quantum wiretrap channels. \emph{Probl. Inf. Transm.} \textbf{40,} 318-336 (2004). 

\bibitem{superPriv} Elkouss, D. \& Strelchuk, S. Superadditivity of private information for any number of uses of the channel. \emph{Phys. Rev. Lett.} \textbf{115,} 040501 (2015).

\bibitem{partialorders} Hirche, C. \& Leditzky, F. Bounding quantum capacities via partial orders and complementarity. \emph{IEEE Trans. Inf. Theory} \textbf{69,} 1 (2022).

\bibitem{lamiWilde} Lami, L. \& Wilde, M. M. Exact solution for the quantum and private capacities of bosonic dephasing channels. \emph{Nat. Photon.} \textbf{17,} 525-530 (2023).

\bibitem{privUBcap} Devetak, I. \& Winter, A. Distillation of secret key and entanglement from quantum states. \emph{Proc. Roy. Soc. A} \textbf{461,} 2053 (2005).

\bibitem{singh} Singh, S. \& Datta, N. Detecting positive quantum capacities of quantum channels. \emph{npj Quantum Inf.} \textbf{8,} 50 (2022).

\bibitem{kathri} Kathri, S. Pauli Channels. https://sumeetkhatri.files.wordpress.com/2020/03/pauli.pdf

\bibitem{surface} Fowler, A. G., Mariantoni, M., Martinis, J. M. \& Cleland, A. N. Surface codes: towards practical large-scale quantum computation. \emph{Phys. Rev. A} \textbf{86,} 032324 (2012).

\bibitem{qtc} Poulin, D., Tillich, J.-P. \& Ollivier, H. Quantum serial turbo codes. \emph{IEEE Trans. Inf. Theory} \textbf{55,} 6 (2009).

\end{thebibliography}
\end{document}